\definecolor{blu3}{rgb}{.1,.0,.4}
\newtheorem{theorem}{Theorem}
\def\DEF#1{\textbf{\emph{#1}}}
\newcommand{\mcu}{\textsc{Global-Mixed-Cut}\xspace}
\newcommand{\stmcu}{\textsc{Rooted-Mixed-Cut}\xspace}
\newcommand{\bmkvc}{\textsc{Bipartite Maximum $k$-Vertex Cover}\xspace}
\newcommand{\kcli}{\textsc{$k$-Clique}\xspace}
\newcommand{\probdef}[3]{
\begin{quote}
	#1\\
	Input: #2\\
	Question: #3
\end{quote}
}
\begin{document}

\title{The Complexity of Mixed-Connectivity}

\author{{\'E}douard Bonnet\thanks{Univ Lyon, CNRS, ENS de Lyon, Universit\'e Claude Bernard Lyon 1, LIP UMR5668, France. Email address: \texttt{edouard.bonnet@ens-lyon.fr}.}
\and
	Sergio Cabello\thanks{Faculty of Mathematics and Physics, University of Ljubljana, Slovenia, and 
			Institute of Mathematics, Physics and Mechanics, Slovenia.
            Supported by the Slovenian Research Agency, program P1-0297 and projects J1-9109, J1-1693, J1-2452.
			Email address: \texttt{sergio.cabello@fmf.uni-lj.si}.}
}

\maketitle

\begin{abstract}
  We investigate the parameterized complexity in $a$ and $b$ of determining whether a graph~$G$ has a subset of $a$ vertices and $b$ edges whose removal disconnects $G$, or disconnects two prescribed vertices $s, t \in V(G)$.
  
    \medskip
    \textbf{Keywords:} mixed-connectivity, mixed-cut, vertex-connectivity, edge-connectivity, NP-completeness, parameterized complexity
\end{abstract}


\section{Introduction}\label{sec:intro}

Vertex- and edge-connectivity are fundamental concepts in graph theory and combinatorial optimization. 
They provide a basic measure of the vulnerability of a network with respect to failures, and serve as building blocks or lie at the heart of several more advanced concepts (such as flows, well-linkedness, and expanders).
A graph $G$ with at least $k+1$ vertices is \DEF{$k$-vertex-connected} if the removal of any $k-1$ vertices of $G$ leaves a connected graph.
Similarly, a graph $G$ is \DEF{$k$-edge-connected} if no removal of $k-1$ edges can disconnect~$G$.

It is also very common to consider the \DEF{rooted} connectivity, or $(s,t)$-connectivity.
For rooted vertex-connectivity, we are also given two non-adjacent, distinct vertices $s,t$ of the graph $G$, the roots, 
and we ask whether the removal of 
any $k-1$ vertices distinct from $s$ and $t$ leave the roots $s$ and $t$ in the same connected component.
For rooted edge-connectivity, the vertices $s,t$ are arbitrary, 
meaning that the edge $st$ may be present in the graph $G$,
and ask whether the removal of any $k-1$ edges leaves some path from $s$ to $t$.

To make the distinction clear, we talk about \emph{rooted} connectivity when we want to disconnect two prescribed vertices and about \emph{global} connectivity when we want to obtain (at least) two connected components.

An alternative interpretation of the rooted connectivity is through hitting sets of the $s$-$t$ paths of the graph.
This connection is made explicit by Menger's theorems, that relate the rooted 
vertex-connectivity to the number of internally vertex-disjoint paths from $s$ to $t$ and the 
edge-connectivity to the number of edge-disjoint paths from $s$ to $t$.
Since the number of vertex and edge disjoint paths can be computed in polynomial time using algorithms for maximum flow,
we can compute the rooted and the global vertex- and edge-connectivity of a graph in polynomial time.

Beineke and Harary~\cite{beineke_harary_1967} considered a natural version of the rooted connectivity where vertices
and edges are removed simultaneously and claimed a Menger-like theorem combining vertex and edge-disjoint paths.
For integers $a,b$, an \DEF{$(a,b)$-mixed cut} is a pair $(W,F)$ such that $W\subset V(G)$,
$F\subset E(G)$, $|W|\le a$, $|F|\le b$ and $(G-F)-W$ is disconnected.
For the rooted version we define a rooted \DEF{$(a,b)$-mixed cut for $s$ and $t$} to be a pair $(W,F)$ 
such that $W\subset V(G)\setminus\{s,t \}$, $F\subset E(G)$, $|W|\le a$, $|F|\le b$ and in $G-(W\cup F)$ there is not path from $s$ to $t$.
The $k$-vertex-connectivity is equivalent to the lack of $(k-1,0)$-mixed cuts, while the $k$-edge-connectivity is equivalent to the lack of $(0,k-1)$-mixed cuts (possibly with respect to roots $s, t$).

The claim of Beineke and Harary was that, if $G$ has no rooted $(a-1,b)$-mixed cut and no rooted $(a,b-1)$-mixed cut for $s$ and $t$, then there exists $a+b$ edge-disjoint paths between $s$ and $t$, of which $a$ are internally pairwise disjoint. 
Note that, contrary to Menger's theorem, the implication is claimed only in one direction, and thus it does not provide a characterization.
Nevertheless, Mader~\cite{mader_1979} pointed out that the proof in~\cite{beineke_harary_1967} is not satisfactory, and the truth of the claim is currently unclear.
The problem has been recently revisited by Johann et al.~\cite{1908.11621} for small values of~$b$ as well as for graphs of treewidth $3$.
The behavior of mixed connectivity of Cartesian product of graphs was considered by Erve{\v s} and {\v Z}erovnik~\cite{ErvesZ16}.

Sadeghi and Fan~\cite{SadeghiF19} claimed that $G$ has no 
$(a,b-1)$-mixed cut if and only if $G$ is $(a+1)$-vertex-connected and $(a+b)$-edge-connected.
While the forward direction is simple, the reverse direction of the implication is wrong and has been retracted by the authors.
As Figure~\ref{fig:counterexample} shows, the latter direction cannot be corrected if we replace the property of $(a+b)$-edge-connectivity 
by $\ell$-edge-connectivity for any $\ell$ depending on $a$ and $b$.
As a consequence, since some of the results in ~\cite{GuSF19} are using this erroneous characterization, their validity is unclear.

\begin{figure}
\centering
	\includegraphics[page=5,scale=.9]{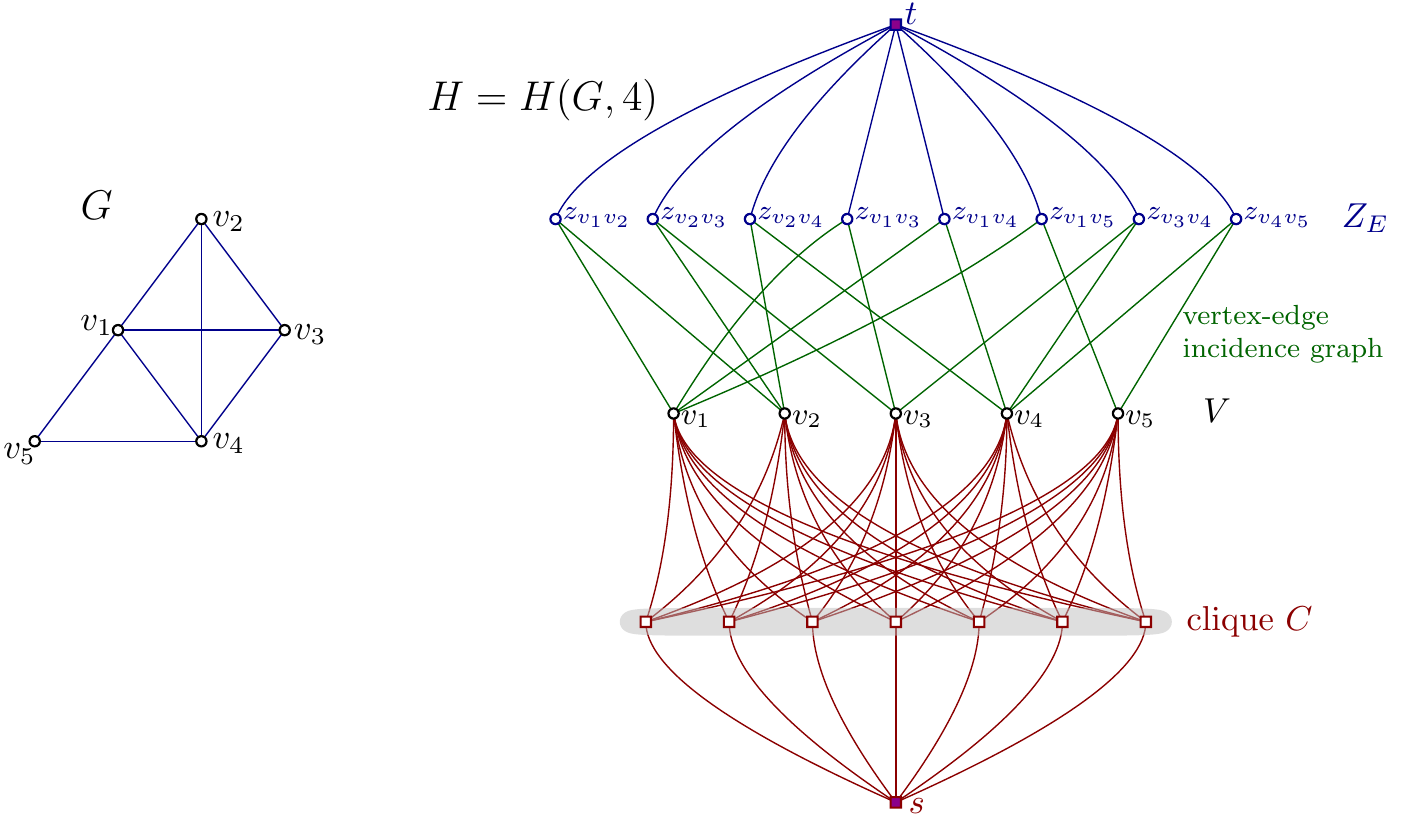}
	\caption{Example of a $3$-vertex-connected graph with arbitrarily large edge-connectivity (in particular, edge-connectivity $2+3=5$), but yet admitting a $(2,2)$-mixed cut.
		Vertices inside each single shaded region form a clique; edges of those cliques are not drawn to keep readability.
		(In the example, the graph is $5$-edge-connected, as we have a ``spanning-tree-like of thickness $5$".)}
	\label{fig:counterexample}
\end{figure}

Our focus in this paper is to analyze the computational complexity of deciding 
whether a graph has a global $(a,b)$-mixed cut or a rooted $(a,b)$-mixed cut, when parameterized by $a$ and/or~$b$. 
More precisely, we consider the following computational problems.

\probdef{\mcu}{An undirected graph $G$, and two positive integers $a$ and $b$.}{Can $G$ be disconnected by the removal of at most $a$ vertices in $V(G)$ and at most $b$ edges in $E(G)$?}

\probdef{\stmcu}{An undirected graph $G$, two distinct vertices $s, t \in V(G)$, and two positive integers $a$ and $b$.}{Can the removal of at most $a$ vertices in $V(G) \setminus \{s,t\}$ and at most $b$ edges in $E(G)$ leave $s$ and $t$ in two distinct connected components?}

These problems can also be stated equivalently as connectivity questions, where one has to be careful
on how to define mixed-connectivity.

The central focus in parameterized complexity~\cite{CyganFKLMPPS15,df13} is whether problems can be solved in time $f(k)n^{O(1)}$, where $n$ is the input size, and $k$ is a parameter value of the instance.
Algorithms with such a running time 
are called fixed-parameter tractable, or FPT for short, in the parameter~$k$.
In our case, we have two natural parameters: $a$ and $b$.
We wonder if these mixed-cut problems on $n$-vertex graphs can be solved in time $f(a)n^d$, $g(b)n^d$ or $h(a,b)n^d$, for some functions $f(\cdot)$, $g(\cdot)$, $h(\cdot)$ and some constant $d$.

\paragraph{Previous results.}
The following problem will be relevant for the forthcoming discussion.
\probdef{\bmkvc (or \textsc{Bipartite Partial Cover})}{An undirected bipartite graph $G$, two positive integers $k$ and $p$.}{Are there $k$ vertices in $V(G)$ touching at least $p$ edges in $E(G)$?}
This problem generalizes the classic \textsc{Vertex Cover} problem on bipartite graphs, by setting $p = |E(G)|$.
However it turns out to be a difficult problem, unlike \textsc{Bipartite Vertex Cover}.

Using the fact that \bmkvc is NP-hard \cite{Apollonio14,Caskurlu17,Joret15}, Rai et al.~\cite{Rai16} and 
Johann et al.~\cite{1908.11621} have noted that that \stmcu is NP-hard.
The basic idea is to attach the vertex $s$ to every vertex of one side of the bipartition and the vertex $t$ to every vertex on the other side.
Now disconnecting $s$ and $t$ by removing $k$ vertices and at most $|E(G)|-p$ edges is equivalent to finding in the bipartite graph $k$ vertices covering at least $p$ edges, which is precisely \bmkvc.

Note that this reduction does \emph{not} imply NP-hardness for \mcu; in the constructed graph, a global mixed-cut could very well disconnect a different pair than $s$ and $t$.
We observe also that \bmkvc is known to be FPT in $k$ \cite{Amini11} and in $|E(G)|-p$ (i.e., number of edges not touched by the $k$ vertices).
Therefore their reduction does not imply parameterized hardness by $a$ only nor by $b$ only.

In the same paper by Rai et al.~\cite{Rai16} it is shown that \stmcu, and even a far-reaching generalization of it, is fixed-parameter tractable (FPT) in $a$ and $b$ combined.
They develop a self-contained algorithm running in time $2^{O((a+b)^3\log{(a+b)})} n^4 \log n$.

The problem can be interpreted as an optimization problem: remove $a$ vertices and
minimize the edge-connectivity (or rooted edge-connectivity) of the remaining graph.
This problem, and generalizations of it,
have been considered in the context approximation algorithms; 
see~\cite{ChuzhoyMVZ16} and references therein.

\paragraph{Our contribution.}
In Section~\ref{sec:hardness} we show that \mcu is in fact also NP-complete.
Actually we show that \mcu, and hence \stmcu, are even W[1]-hard parameterized by $b$ only (i.e., the maximum number of edges to remove).
We also prove that \stmcu is W[1]-hard parameterized by $a$ only (i.e., the maximum number of vertices to remove).

As noted before, Rai et al.~\cite{Rai16} show that \stmcu is fixed-parameter tractable in $a+b$ with a running time
of $2^{O((a+b)^3\log{(a+b)})} n^4 \log n$ for graphs with $n$ vertices.
One may wonder whether the known heavy machinery, that one could summarize as ``small treewidth or large clique minor or large flat wall'', used for instance to solve $k$-Disjoint Paths in cubic \cite{gm13} and then quadratic time \cite{KKR12},  
can also solve \stmcu in quadratic time. 
In Section~\ref{sec:irrelevant} we show that a straightforward application of the technique does not work;
a bottleneck is the case of large clique minor.
This of course does not exclude the option for faster algorithms modifying the approach.

\section{Preliminaries and notation}
For a graph $G$ and a subset $S$ of its vertices, $G[S]$ is the subgraph of $G$ induced by $S$.
Thus, $G[S]=(S, \{uv \in E(G)\mid u,v \in S \})$.
For a graph $G$ and two disjoint subsets of vertices $X, Y \subseteq V(G)$, we denote by $E_G(X,Y)$ the set of edges with one endpoint in $X$ and another endpoint in $Y$.
Thus, $E_G(X,Y)=\{ xy\in E(G)\mid x\in X,~ y\in Y\}$.

We provide a quick, informal overview of the concepts we will use from parameterized complexity and refer the interested reader to the standard textbooks, such as~\cite{CyganFKLMPPS15,df13}, for a comprehensive treatment.

In the \kcli problem, given a graph, one is asked whether it contains a clique of size~$k$, that is, a subset of~$k$ vertices with all the edges between them.
The \kcli problem is a $W[1]$-complete problem, hence unlikely to have an FPT algorithm; 
see~\cite[Theorem 21.2.4]{df13} or~\cite[Chapter 13]{CyganFKLMPPS15} for statements of this classical result.
The inputs of parameterized problems are pairs, formed by an instance $I$ and a parameter value $\kappa(I)$, related to a feature of the instance other than its size.
The most natural parameters are the size of the desired solution or integer thresholds used in the problem definition.

Consider a parameterized problem $\Pi$ with parameter $\kappa$.
In an fpt-reduction from \kcli to~$\Pi$, we reduce a \kcli-instance $(G,k)$ to a $\Pi$-instance $(I,\kappa(I))$ such that $\kappa(I)$ depends \emph{only} on $k$, not on the size of $G$.
An fpt-reduction from \kcli to $\Pi$ shows that $\Pi$ is W[1]-hard with respect to the parameter $\kappa$.
The intuition is that, if we would be able to solve the problems of $\Pi$ in time $f(\kappa(I)) \cdot p(|I|)$ for some function $f(\cdot)$ and some polynomial $p$, then we could solve \kcli in time $g(k)\cdot q(n)$ for a function $g(\cdot)$ and a polynomial $q$.

Another cornerstone is the Exponential Time Hypothesis (ETH); for its precise definition we refer to the textbooks. 
One of the important consequences of the ETH, which is potentially weaker than the ETH, 
is that a SAT problem with $n$ variables and $m$ clauses cannot be solved
in time $2^{o(n)}p(n,m)$ for any polynomial $p(\cdot,\cdot)$.
Assuming the ETH, there is no algorithm to solve the \kcli problem in $f(k)n^{o(k)}$ time for any computable
function $f(\cdot)$;
see~\cite[Theorem 29.7.1]{df13} or~\cite[Theorem 14.21]{CyganFKLMPPS15}.

Assume that we have an fpt-reduction from \kcli with parameter $k$ to instances $I$ of $\Pi$ with parameter $\kappa$ such that $\kappa(I)=O(k)$.
Under the ETH, we can conclude that the instances $I$ of $\Pi$ cannot be solved in time $g(k)|I|^{o(\kappa)}$ for any computable function $g(\cdot)$.
Otherwise, we could use the reduction to solve the \kcli problem in $g(O(k)) n^{o(O(k))}$, which would contradict the ETH.

\section{Parameterized hardness with respect to $a$ only or $b$ only}
\label{sec:hardness}

\begin{theorem}\label{th:param-by-a}
  \stmcu is $W[1]$-hard parameterized by $a$ only.
  Moreover, unless the ETH fails, there is no computable function $f$ such that \stmcu can be solved in time $f(a)|V(H)|^{o(a)}$ on instances $(H,s,t,a,b)$.  
\end{theorem}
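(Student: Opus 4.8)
The plan is to give an fpt-reduction from \kcli (equivalently, from its multicolored variant, which inherits both the $W[1]$-hardness and the $n^{o(k)}$ ETH lower bound) in which the vertex budget is $a=\Theta(k)$ while the edge budget $b$ is allowed to be polynomial in the size of the clique instance. Starting from a graph $G$ with color classes $V_1,\dots,V_k$, I would build an instance $(H,s,t,a,b)$ of \stmcu so that a rooted $(a,b)$-mixed cut exists if and only if $G$ has a multicolored clique, with $a$ depending only on $k$. The two design goals are \emph{selection} (the $a$ removed vertices must encode the choice of one vertex $v_i\in V_i$ per class) and \emph{validation} (the edge budget is set tightly enough that the residual graph can be $(s,t)$-disconnected by at most $b$ further edges precisely when the selected vertices are pairwise adjacent).

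First I would arrange the gadgetry so that the cheapest ways to reduce the $s$-$t$ connectivity are to delete vertices of $G$, one representative per class. Concretely, every edge that must survive is replaced by a \emph{thick} bundle of many internally disjoint parallel $s$-$t$ connections (subdivided paths), so that severing such a bundle outright already exceeds $b$; in particular $s$ and $t$ are attached thickly and so cannot be cheaply isolated. The edge budget is then only large enough to handle the thin validation edges introduced below. The combination of thickness and a vertex budget $a=\Theta(k)$ is meant to force any within-budget cut to place exactly one removed vertex in each class $V_i$, thereby reading off a candidate transversal $(v_1,\dots,v_k)$.

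Second, for each pair of classes $\{i,j\}$ I would add a thin \emph{validation} connection from $s$ to $t$ routed through the $V_i$–$V_j$ edges of $G$, arranged so that, once $v_i$ and $v_j$ have been selected, this connection can be closed with a prescribed number of edge deletions when $v_iv_j\in E(G)$ and with strictly more when $v_iv_j\notin E(G)$. Setting $b$ to the total cost incurred when all $\binom{k}{2}$ pairs are adjacent makes the instance feasible exactly for multicolored cliques: every non-adjacent selected pair forces at least one extra edge deletion and pushes the requirement above $b$. Crucially, validation must be charged only to edges and never to vertices, since $\binom{k}{2}$ pairs cannot be paid for by the linear vertex budget.

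The forward direction (clique $\Rightarrow$ cut) is then immediate: select the clique vertices and close each validation connection at its cheap price. The reverse direction, and the main obstacle, is establishing \emph{rigidity}: I must prove that every rooted $(a,b)$-mixed cut of $H$ is, up to irrelevant slack, of the intended canonical form — that the bundle thickness genuinely rules out all cheating (cutting a protected bundle, spending vertices off the intended choke points, isolating $s$ or $t$, or disconnecting some pair other than $s,t$), so that any feasible cut reads off a multicolored clique. Once this is in place, $H$ has $\mathrm{poly}(n)$ vertices and $a=\Theta(k)$, so the construction is an fpt-reduction and \stmcu is $W[1]$-hard in $a$; moreover an algorithm running in $f(a)\,|V(H)|^{o(a)}$ would solve \kcli in $f(\Theta(k))\,\mathrm{poly}(n)^{o(k)}=g(k)\,n^{o(k)}$ time, contradicting the ETH.
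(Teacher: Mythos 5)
Your high-level scheme is, in skeleton, the same as the paper's: a vertex budget $a=\Theta(k)$ meant to force the selection of $k$ vertices of $G$, plus an edge budget $b$ that may be polynomial in the input (harmless when parameterizing by $a$ alone) priced so tightly that it suffices exactly when the selected vertices induce $\binom{k}{2}$ edges. The paper instantiates this directly from plain \kcli (no multicolored variant needed) with a completely explicit graph: $s$ attached to a clique $C$ of size $a+b+1$ fully joined to an independent copy of $V(G)$, an independent set $Z_E$ with $z_e$ adjacent to the two endpoints of $e$, and $t$ fully joined to $Z_E$, with $a=k$ and $b=m-\binom{k}{2}$; your ``thin validation connection'' for a pair is in effect the path $s$--$C$--$v$--$z_e$--$t$, which becomes free to disconnect exactly when both endpoints of $e$ are deleted. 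The trouble is that your write-up stops at the design level: no gadget is ever specified (what exactly is a bundle, where does it attach, what are the costs), and you explicitly identify the reverse direction --- rigidity of feasible cuts --- as ``the main obstacle'' and then simply assert it. That step is where essentially all the content of this theorem lies, so as it stands the proposal is a plan, not a proof.

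Two concrete things are missing, and qualitative ``thickness'' cannot supply either. First, thickness only makes \emph{edge}-cuts of protected connections expensive; it does not prevent a solution from spending its $a$ vertex deletions inside the validation gadgetry itself (the analogue of deleting the vertices $z_e$, or attachment vertices near $t$), nor from trading vertex deletions for edge deletions in unintended proportions. The paper handles this with an explicit normalization-and-exchange argument: among all $s$--$t$ mixed cuts of total size at most $a+b$, take one minimizing the number $a'$ of deleted vertices, then exchange a deleted $z_e$ for the edge $z_et$ and a deleted incidence edge $vz_e$ for $z_et$. Second, even after canonicalization, ``each non-adjacent selected pair costs one extra edge deletion'' is not enough: a cut may delete $t<k$ selection vertices and correspondingly more edges, and excluding this requires the quantitative inequality $\binom{a'}{2}\ge e(S)\ge\binom{a}{2}+a'-a$, which forces $a'=a$ and $e(S)=\binom{a}{2}$ only for $k$ large enough. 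This delicacy is not hypothetical: in the paper's own construction with $k=3$ (so $a=3$, $b=m-3$), deleting three vertices of $Z_E$ and the $m-3$ remaining edges at $t$ isolates $t$ within budget \emph{irrespective of} $G$ --- exactly the degenerate cut that corresponds to the slack solution $a'=0$ of the inequality, and exactly the kind of failure a thickness heuristic does not detect (so the safe assumption is in fact $k\ge 4$). Until you fix a concrete validation gadget and prove such exchange and counting lemmas, the backward direction of your reduction, and hence the theorem, remains unestablished.
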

\begin{proof}
  We reduce from the \kcli problem, which is W[1]-complete parameterized by the solution size~$k$;
  see the discussion above.
  Let $(G,k)$ be an instance of \kcli.
  We build an equivalent instance $(H,s,t,a:=k,b:=|E(G)|-\binom{k}{2}$)  of \stmcu in the following way.
  See Figure~\ref{fig:param-by-a-1} for an example.
  Let $V=V(G)$, $E=E(G)$ and $m=|E|$.

  We start the description of $H$ with the vertex $s$ that we make adjacent to a clique $C$ of size $a+b+1$.
  We add to $H$ all the vertices $V$, without any edges between them,
  and make $C$ fully adjacent to each vertex of $V$.
  We add to $H$ an independent set $Z_E$ in one-to-one correspondence with the edges of $G$.
  We denote by $z_e$ the vertex corresponding to the edge $e \in E(G)$,
  and we link $z_e \in Z_E$ to $v \in V$ whenever $v$ is an endpoint of $e$.
  We finally add the new vertex $t$ that we fully link to $Z_E$.
  To summarize, $V(H) := \{s\} \cup C \cup V \cup Z_E \cup \{t\}$, 
  and the edges of $H$ can be described as: the clique $C$ is fully adjacent to the independent set $V(G)\cup \{s\}$, 
  $t$ is fully adjacent to $Z_E$, and $E_H(V,Z_E)$ is (isomorphic to) the vertex-edge incidence graph of $G$.
  We allow to delete up to $a:=k$ vertices and $b:=m-\binom{k}{2}$ edges.
  We now show the correctness of the reduction: the graph $G$ has a $k$-clique if and only if the graph $H$ has an $(a,b)$-mixed cut for $s$ and $t$.

  \begin{figure}
  \centering
	\includegraphics[page=1,scale=1]{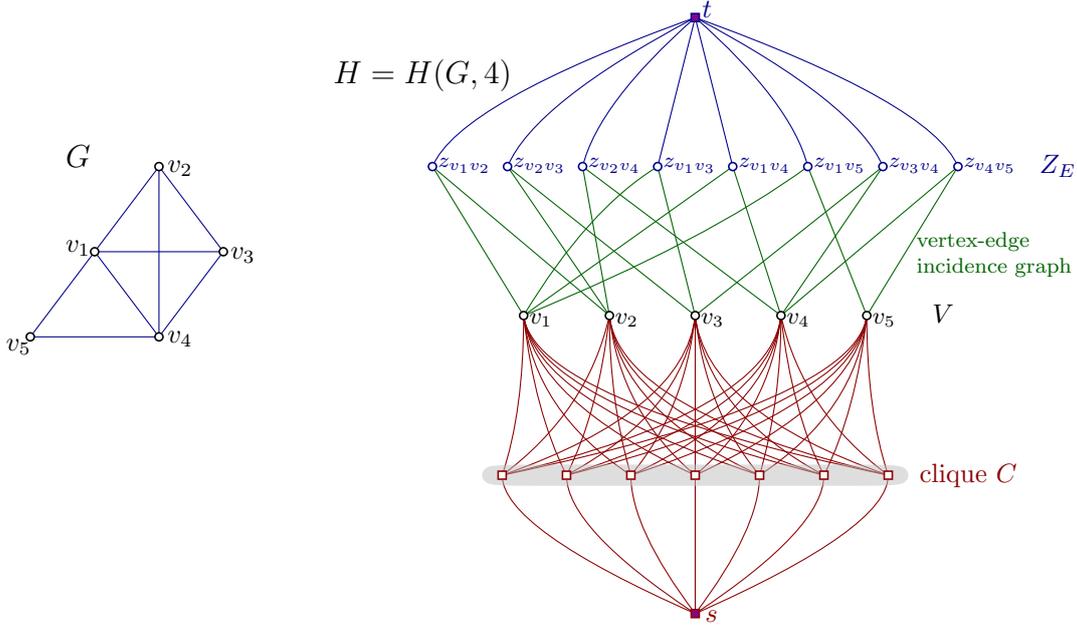}
	\caption{Example showing the reduction in the proof of Theorem~\ref{th:param-by-a}.
		On the left side we have an instance $(G,4)$ for the problem \kcli, and on the right
		we have the instance $(H,s,t,a,b)$ with $a=4$ and $b=8-6=2$ for \stmcu.
		All the vertices in the shaded region form a clique.}
		\label{fig:param-by-a-1}
  \end{figure}
  
  Let us assume that $G$ admits a $k$-clique $S \subset V$. 
  See Figure~\ref{fig:param-by-a-2} to see the construction in the example of Figure~\ref{fig:param-by-a-1}.
  Let $Z' \subset Z_E$ be the set of vertices $z_e\in Z_E$ 
  such that $e \in E(G)$ has at least one endpoint outside $S$,
  and let $F \subseteq E(H)$ be all the edges between $t$ and $Z'$.
  We claim that $(S,F)$ is an $(a,b)$-mixed cut for $s$ and $t$, hence a solution for \stmcu.
  The set $S$ is indeed of size $a=k$, and the number of edges of $F$ is $|Z'|=m-e(S)$, 
  where $e(S)$ is the number of edges in $G[S]$.
  Since $S$ is a $k$-clique in $G$, we have $e(S) = \binom{k}{2}$ and thus $|F|=m-\binom{k}{2}=b$.
  It remains only to argue that there is no path between $s$ and $t$ in $H' := (H-F)-S$.
  The only vertices in $H'$ adjacent to $t$ are the 
  vertices $z_e\in Z_E$ for which $e$ is an edge of the clique induced by $S$, 
  namely the vertices $Z_S := Z_E \setminus Z'$.
  On the other hand, since $N_{H}(Z_S) = \{ t \} \cup S$, 
  in the graph $H'$ the vertices $Z_S$ are only adjacent to $t$.
  We conclude that 
  $\{t\} \cup Z_S$ is a (maximal) connected component in $H'$,
  and therefore there is no $s$-$t$ path in $H'$.
 
  \begin{figure}
  \centering
	\includegraphics[page=2,scale=1]{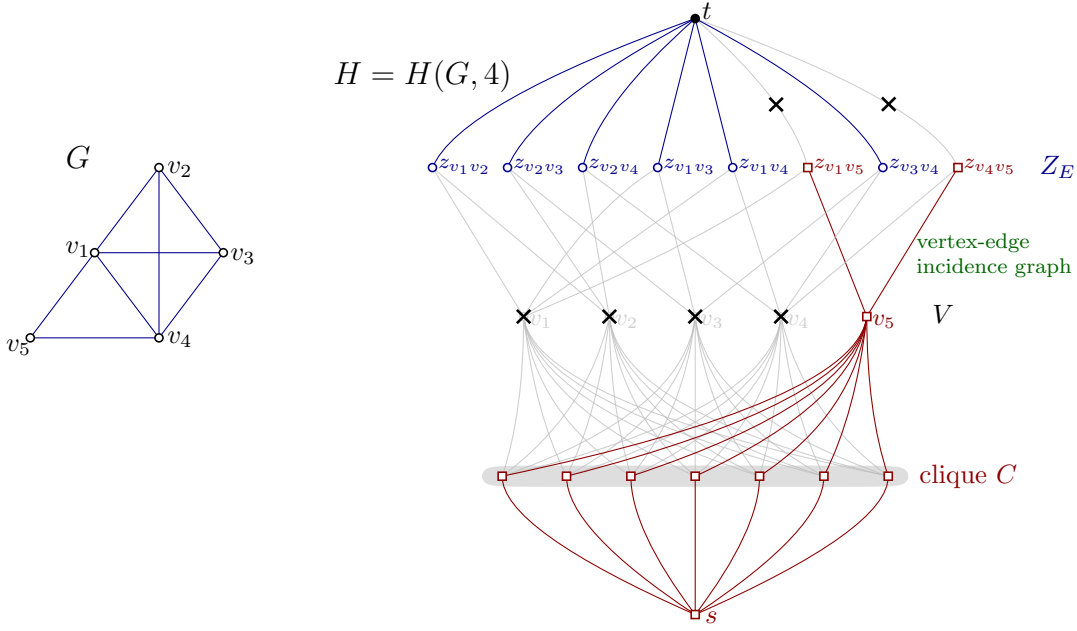}
	\caption{The $(4,2)$-cut in the graph of Figure~\ref{fig:param-by-a-1} 
		obtained from the $4$-clique $\{v_1, v_2, v_3, v_4\}$ in $G$.}
	\label{fig:param-by-a-2}
  \end{figure}
  
  We now assume that there is a solution for the \stmcu instance.
  A first observation, as $C$ has size $a+b+1$, is that one cannot disconnect $s$ from any remaining vertex of $V$ by removing vertices of $C$ and edges incident to $C$ (within their respective limit of $a$ and $b$).
  It is therefore useless to remove vertices of $C$ or edges incident to $C$.
  This also implies that the solution has to cut $\{s\} \cup C \cup V$ (or rather what is left of it) from $t$.
  Among all the mixed cuts separating $s$ from $t$ with at most $a+b$ objects in total (mixing vertices and edges), 
  we consider one using the minimum number $a' \leqslant a$ of vertices and, subject to this,
  using the minimum number $b' \leqslant b+(a-a')$ of edges.
  We next note that removing a vertex in $Z_E$ is a waste of the vertex-budget because
  instead of removing the vertex $z_e$ one can just as well remove the edge $z_et$. (Here we use the minimization of $a'$.)
  Indeed, for any edge $uv$ of $G$,
  whether the vertices $u$ and $v$ keep being connected is independent of the removal of $z_{uv}$
  because of $C$; it just depends on whether $u$ and $v$ are being removed or not.
  We can thus assume that the mixed cut \emph{only} removes vertices of $V$ and some additional edges.
  Again instead of deleting an edge $vz_e$ in the incidence graph $H[V \cup Z_E]$, 
  we can assume that we remove the edge $z_e t$.
  Indeed, the removal of $vz_{uv}$ to put $v$ and $z_{uv}$ in different components requires
  that either we remove also $uz_{uv}$ or $u$. (The vertex $t$ cannot be removed.)
  In either case we could as well remove only the edge $z_{uv}t$ (and perhaps change the connected component of $z_{uv}$).

  We have seen that we can restrict our attention to solutions that remove only vertices of $V$ and edges of $E_H(\{t\},Z_E)$.
  Let $S \subseteq V$ be the subset of $a' \leqslant a$ vertices removed by the solution  
  and note that we are removing in the solution at most $b+a-a' = m - \binom{a}{2} + a - a'$ edges of $E_H(\{t\},Z_E)$.
  Besides the edges of $E_H(\{t\},Z_E)$ that can be kept are the 
  $e(S) \leqslant \binom{a'}{2}$ corresponding to the edges in $G[S]$.
  So it holds that $\binom{a'}{2} \geqslant e(S) \geqslant \binom{a}{2} + a' - a$.
  Whenever $a=k\ge 3$, which we may assume, 
  this is only possible if $a=a'$ and $e(S)=\binom{a}{2}$, implying that $S$ is a clique of size $a=k$ in $G$. 
  
  The graph $H$ has $|V|+m+a+b+3$ vertices, can be built in polynomial time, and the parameter $a$ is set equal to $k$.
  Therefore the problem inherits the hardness of \kcli, namely W[1]-hardness and the claimed ETH lower bound.
\end{proof}

An algorithm with matching running time $n^{O(a)}$ (even $n^{a+O(1)}$) is immediate by running through all subsets $S \subset V(H)$ on up to $a$ vertices, and trying to find an edge-$(s,t)$-cut of cardinality at most $b$ on each instance $H-S$. 
In the previous reduction, we have vertices of degree three (each vertex of $Z_E$), so those vertices can be disconnected from the rest of the graph (as long as $a+b \geqslant 3$).
Therefore it does not imply any hardness for \mcu.

We use a different strategy to show that \mcu is NP-hard.
The same reduction even shows W[1]-hardness parameterized by the number of removed edges of both \mcu and its rooted version.

\begin{theorem}\label{th:param-by-b}
  \mcu and \stmcu are NP-hard and $W[1]$-hard parameterized by $b$ only.
\end{theorem}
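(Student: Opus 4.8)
The plan is to reduce from \kcli, producing an instance in which the edge budget $b$ is a function of $k$ alone while the vertex budget $a$ is allowed to grow with $|V(G)|$; this is exactly what an fpt-reduction for the parameter $b$ permits, and it is what makes the problem hard even though cut-size parameterizations of cut problems are usually tractable. As in the proof of Theorem~\ref{th:param-by-a}, I would glue an \emph{anchor clique} $C$ of size $a+b+1$ onto a source-like region. Since $C$ is a clique larger than the entire budget $a+b$, no mixed cut within budget can split $C$ or delete all of it, so $C$ survives as one connected blob lying entirely on one side of any cut. By arranging that a single target vertex (or small target region) is the only thing a within-budget cut can peel off from the blob, every global mixed cut is forced to separate this target from $C$; this simultaneously pins down the shape of the cut and makes the \mcu instance behave like a \stmcu instance, which is why a single reduction can serve both problems and yield the global NP-hardness that the earlier \bmkvc-based reduction did not.

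The combinatorial core is to let \emph{vertex deletions} choose which vertices of $G$ survive and to let the $b$ \emph{edge deletions} certify that the survivors form a $k$-clique. Concretely, I would keep a copy of $V(G)$ attached to the anchor, delete all but $k$ of these vertices (so $a \approx |V(G)|-k$ forces at least $k$ survivors), and route the interaction between the survivors and the target through small \emph{port/bridge} gadgets, one per edge of $G$, arranged so that the number of bridges still connecting the two sides equals a quantity minimized exactly when the survivors form a clique, whose minimum value depends on $k$ only. Correctness would then follow the template of Theorem~\ref{th:param-by-a}: a $k$-clique yields a cut meeting both budgets, and conversely an inequality in the spirit of $\binom{a'}{2} \ge e(S) \ge \binom{k}{2}+a'-k$ forces the survivor set $S$ to have exactly $k$ vertices and $\binom{k}{2}$ internal edges, i.e.\ to be a $k$-clique. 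Setting $b$ equal to this minimum and transferring the $W[1]$-hardness (and the ETH lower bound) of \kcli then gives the statement.

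I expect the main obstacle to be making the edge charge depend on $k$ \emph{only}, and not on the degrees or the size of $G$. A naive incidence gadget, as used in Theorem~\ref{th:param-by-a}, leaks, for a surviving set $S$, a number of crossing edges equal to $\sum_{v\in S}\deg_G(v)-e(S)$, which is degree-dependent; hence the bridges must funnel each vertex's incidences through a bounded bottleneck so that the residual edge cost is degree-free and clique-detecting. The second, closely related difficulty is ruling out the \emph{trade-off cheat}: since vertex deletions are essentially unbounded, the adversary might try to push the edge cut below the clique threshold by deleting extra vertices (for instance deleting all edge-gadgets, or the whole neighbourhood of the target). The anchor clique is designed to block the easy version of this, because the target cannot be isolated by deletions while it is tied to an over-large clique; but I would still need to choose $a$ precisely enough that the honest clique solution is feasible while every bulk-deletion strategy either exceeds $a$ or fails to disconnect. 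Verifying this converse robustly, rather than the forward construction, is where I expect the real work to lie.
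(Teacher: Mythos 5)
Your high-level strategy does match the paper's: reduce from \kcli, make $b=\binom{k}{2}$ depend on $k$ alone while $a$ grows with the input, use cliques of size $a+b+1$ as unbreakable anchors, and close the converse with a counting inequality forcing exactly $k$ special vertex deletions. But the proposal stops precisely where the proof has to start: the two obstacles you flag (a degree-free edge charge, and blocking the bulk-deletion trade-off) are exactly the content of the theorem, and both are left unresolved. Worse, your specific design choices make the first obstacle harder than it needs to be. If the $k$ clique vertices are the \emph{survivors} (your $a\approx|V(G)|-k$ plan) and they interact with the target through per-edge gadgets, the crossing charge is inherently of the form $\sum_{v\in S}\deg_G(v)-e(S)$, and no bounded-bottleneck ``funnel'' is exhibited; it is also unclear one can exist in your single-anchor architecture, since any low-degree funnel vertex can itself be deleted by the enormous vertex budget, reopening the very cheat you are trying to block. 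Likewise, ``a single small target region is the only thing a within-budget cut can peel off'' is asserted, not enforced, and enforcing it when $a$ grows with $n$ is nontrivial.

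The paper resolves both issues with two devices absent from your sketch. First, it flips the orientation: the $k$-clique is the set of $V$-vertices that get \emph{deleted}. The copy of $V$ lives inside a large clique $Y$ (so degrees of surviving $V$-vertices never cross the cut); for each edge $e$ there is a vertex $z_e$ in a second large clique $Z$, adjacent to the two endpoints of $e$ in $V$ and matched to a private partner $y_e\in Y$. Deleting a $k$-clique $S\subseteq V$ together with the $m-\binom{k}{2}$ vertices $z_e$ with $e\not\subseteq S$ (so $a=m-\binom{k}{2}+k$) leaves each surviving $z_e$ with both $V$-neighbors deleted, hence exactly the $\binom{k}{2}$ matching edges cross: the edge charge is degree-free with no extra gadgetry. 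Second, instead of one anchor plus a peelable target, \emph{both} sides $Y$ and $Z$ are padded with $a+b+1$ extra clique vertices ($D_1$, $D_2$), so no within-budget mixed cut can split either clique or delete it wholesale; every $(a,b)$-mixed cut must therefore run between the remnants of $Y$ and $Z$. This settles your trade-off cheat structurally (and is what makes the \emph{global} problem \mcu hard, with \stmcu following by placing $s\in D_1$, $t\in D_2$), rather than by a delicate tuning of $a$. Finally, the converse is not your imported inequality $\binom{a'}{2}\ge e(S)\ge\binom{k}{2}+a'-k$ from Theorem~\ref{th:param-by-a}; here the slack trades $V$-deletions against $Z_E$-deletions, and the paper instead counts incidence plus matching edges and uses $\binom{k}{2}-2k>\binom{t}{2}-2t$ for all $k>5$ and $0\le t<k$ (whence the assumption $k>5$) to force $|W\cap V|=k$. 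Without these ingredients your text is a correct diagnosis of the difficulties, not yet a proof.
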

\begin{proof}
  We reduce again from \kcli. 
  Let $G$ be the instance of \kcli. We assume without loss of generality that $k>5$.
  Set $V=V(G)$, $E=E(G)$ and $m=|E|$.
  We build $(H,a,b)$, instance of \mcu, as follows.
  See Figure~\ref{fig:param-by-b-1} for an example.
  The whole graph $H$ is partitioned into two cliques $Y \cup Z$ with $Y := V \cup Y_E \cup D_1$ and $Z := Z_E \cup D_2$,
  where both $Y_E$ and $Z_E$ are sets of vertices in one-to-one correspondence with $E$, 
  and $D_1$ and $D_2$ are two sets, each of size $a+b+1$, to force a certain structure.
  We denote by $y_e$ (resp.~$z_e$) the vertex of $Y_E$ (resp.~of $Z_E$) corresponding to the edge $e \in E(G)$.
  In addition to the edges of the cliques $Y$ and $Z$, we add the incidence graph of $G$ between $V$ and $Z_E$.
  We also add each edge $y_e z_e$ for each $e \in E(G)$; thus we have a matching between between $Y_E$ and $Z_E$.
  We set $a := m - \binom{k}{2} + k$ and $b := \binom{k}{2}$.

  \begin{figure}
  \centering
	\includegraphics[page=3,scale=.8]{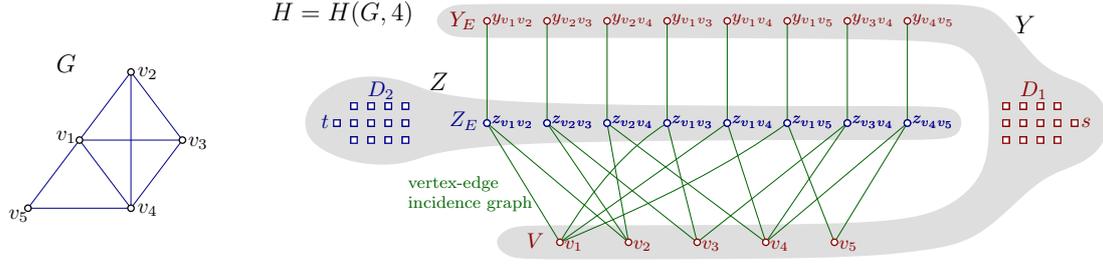}
	\caption{Example showing the reduction in the proof of Theorem~\ref{th:param-by-b}.
		On the left side we have an instance $(G,4)$ for the problem \kcli, and on the right
		we have the instance $(H,a,b)$ with $a=8-\binom{4}{2}+4=6$ and $b=\binom{4}{2}=6$ for \mcu.
		All the vertices in each of the shaded regions form a clique.}
		\label{fig:param-by-b-1}
  \end{figure}

  We now show the correctness of the reduction: the graph $G$ has a $k$-clique if and only if
  the graph $H$ has an $(a,b)$-mixed cut, under the assumption that $k>5$.
  
  Let us suppose that there is a clique $S$ of size $k$ in $G$.
  Let $Z'$ the $m - \binom{k}{2}$ vertices of $Z_E$ which do not have both endpoints in $S$.
  Let $F$ be the $\binom{k}{2}$ edges of $E_H(Z_E,Y_E)$ incident to the vertices $Z_E \setminus Z'$.
  We claim that $(S \cup Z', F)$ is an $(a,b)$-mixed cut in $H$.
  For the sizes, note that $|S \cup Z'|=|S|+|Z'| = k + m - \binom{k}{2}=a$ and $|F|=\binom{k}{2}=b$.
  Regarding the property of being a cut, since $H[V \cup Z_E]$ is the incidence graph of $G$, 
  the only edges between $Y \setminus S$ and $Z \setminus Z'$ are the $\binom{k}{2}$ edges between $Z \setminus Z'$ and $Y_E$,
  that is $F$.
  See Figure~\ref{fig:param-by-b-2} for the mixed cut that we construct for the positive instance of Figure~\ref{fig:param-by-b-1}.

  \begin{figure}
  \centering
	\includegraphics[page=4,scale=.8]{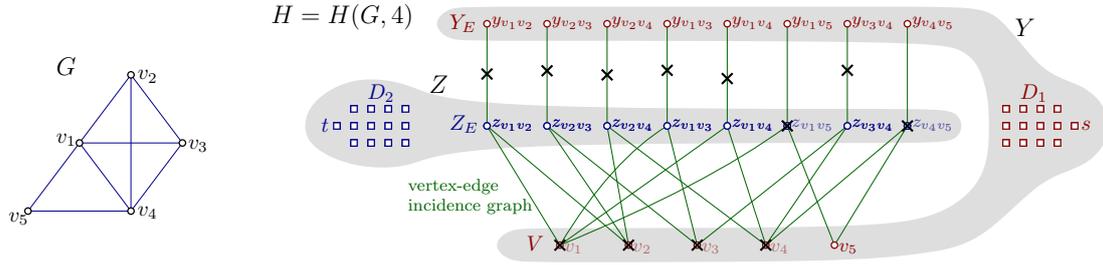}
	\caption{The $(a=6,b=6)$-cut in the graph of Figure~\ref{fig:param-by-b-1}, due to the $4$-clique $\{v_1,v_2,v_3,v_4\}$ in $G$.}
	\label{fig:param-by-b-2}
  \end{figure}

  Now let us assume that the \mcu-instance instance has an $(a,b)$-mixed cut $(W \subseteq V(H), F \subseteq E(H))$.
  Let $W_Y := W \cap Y$ and $W_Z := W \cap Z$.
  Because of the sets $D_1$ and $D_2$, $|Y| > a+b+1$ and $|Z| > a+b+1$.
  Hence it is not helpful to remove edges in the induced subgraphs $H[Y]$ or $H[Z]$, and we can assume
  that $F\subseteq E_H(Y,Z)$.  
  The problem is therefore equivalent to removing at most $a$ vertices so that there are at most $b$ edges between 
  what is left of $Y$ and what is left of $Z$.
  Since it is always better to remove the vertex $z_e$ than the vertex $y_e$, 
  one can and shall assume that $W_Y \subseteq V$ and $W_Z \subseteq Z_E$.
  
  Let us analyze $|W_Y|$ and $|W_Z|$. We will use the following property, which follows from the fact
  that the function $x\mapsto (x^2-x)/2 - 2x$ is a parabola with minimum at $x=5/2$.
  \begin{equation}\label{eq:kt}
	\forall k>5 \text{ and } k>t\ge 0:~~~~ \binom{k}{2}-2k > \binom{t}{2}-2t . 
  \end{equation}
  First note that, since $|F|=b=\binom{k}{2}$, the matching $E_H(Z_E\setminus W_Z,Y_E)$ should
  have at most $\binom{k}{2}$ edges left, which means that $Z_E\setminus W_Z$ should have at most
  $\binom{k}{2}$ vertices, and thus $|W_Z|\ge m- \binom{k}{2}$. The remaining budget of $a$ implies
  that we remove at most $k$ vertices $W_Y\subseteq V$. In short, $|W_Y|\le k$.
  
  We next show that $|W_Y|=k$.
  Assume, for the sake of reaching a contradiction, that the solution is removing $t<k$ vertices $W_Y\subseteq V$
  and $a-t=m-\binom{k}{2}+k-t$ vertices $W_Z\subset Z_E$.
  We count the remaining edges from the perspective of $Z_E\setminus W_Z$.
  To bound the edges remaining between $Z_E$ and $V$, we note that each vertex of 
  $Z_E\setminus W_Z$ has exactly two neighbors at $V$, and at most
  $|E(G[W_Y])|\le \binom{|W_Y|}{2}=\binom{t}{2}$ vertices of $Z_E\setminus W_Z$
  have both neighbors in $W_Y$. 
  Thus, each vertex of $Z_E\setminus W_Z$, but for $\binom{t}{2}$ of them,
  have at least one neighbor in $V\setminus W_Y$.
  In short, we have 
  \begin{align*}
	|E_H(V\setminus W_Y,Z_E\setminus W_Z)| ~&\ge~ 
    	|Z_E\setminus W_Z| - \binom{t}{2} 
		~=~ m - \left( m-\binom{k}{2} + k-t \right) - \binom{t}{2}\\
		~&=~ \binom{k}{2} - k + t - \binom{t}{2},
  \end{align*}
  while the number of remaining edges between $Z_E$ and $Y_E$ is at least
  \[
	|E_H(Y_E,Z_E\setminus W_Z)| ~\ge~ m - \left( m-\binom{k}{2} + k-t \right) ~=~ \binom{k}{2} - k + t   
  \]
  This means that, after the removal of $W\subseteq V\cup Z_E$, the
  number of edges between $Y$ and $Z$ that remain is 
  \[
	|E_H(V\setminus W_Y,Z_E\setminus W_Z)|+|E_H(Y_E,Z_E\setminus W_Z)| ~\ge~
		2 \binom{k}{2} - 2k + 2t - \binom{t}{2} ~>~ \binom{k}{2} ~=~ b,
  \]
  where we have used \eqref{eq:kt} for the last inequality. 
  This means that removing $t<k$ vertices $W_Y\subset V$ we cannot obtain an $(a,b)$-mixed cut,
  and therefore it must be $|W_Y|=k$.
  
  From $|W_Y|=k$ and the fact that we only remove vertices in $V\cup Z_E$, we obtain
  that $|W_Z|= m- \binom{k}{2}$ and $F$ is the $\binom{k}{2}$ edges in $E_H(Z_E\setminus W_Z,Y_E)$.
  This implies that $W_Y$ is a clique in $G$ with $k$ vertices.
  
  The graph $H$ has $|V|+2(m+a+b+1)$ vertices, can be built in polynomial time, 
  and the parameter $b$ is equal to $\binom{k}{2}$.
  Therefore the problem $\mcu$ is NP-hard and inherits the W[1]-hardness of \kcli.
  The same hardness immediately holds for \stmcu by calling $s$ one vertex of $D_1$, and $t$ one vertex of $D_2$.
\end{proof}

\section{Quadratic FPT algorithm?}
\label{sec:irrelevant}

Rai et al.~\cite{Rai16} show that \stmcu can be solved in time
$2^{O((a+b)^3\log{(a+b)})} n^4 \log n$ for graphs with $n$ vertices. Thus, the problem is 
fixed-parameter tractable in $a+b$. This implies that the \mcu is also fixed-parameter tractable, 
as we can try all $n^2$ pairs of vertices for $s$ and $t$, giving a running time
of $n^2 \cdot 2^{O((a+b)^3\log{(a+b)})} n^4 \log n= 2^{O((a+b)^3\log{(a+b)})} n^6 \log n$.
A slightly better asymptotic running time can be obtained observing that it suffices to 
take a subset $U$ of $V(G)$ with $a+1$ vertices and check the existence of
a rooted $s$-$t$ mixed separator for all the pairs
\[
	(s,t) ~\in~ \{ (u,v)\mid u\in U,~ v\in V(G),~ u\neq v\}.
\]
Indeed, if there exists an $(a,b)$-mixed separator $(W,F)$, where $W\subset V(G)$, 
$F\subset E(G)$, $|W|\le a$ and $|F|\le b$, 
then at least one of the vertices of $U$ is not in $W$ because $|U|=a+1$. 
When we try a pair $(s,t)$ with $s\in U\setminus W$ and $t$ in the component of $G-(W\cup F)$ that
does not contain $s$, we will find an $(a,b)$-mixed cut for $s$ and $t$.
(Possibly we find $(W,F)$ or another one.)
Thus, we need to invoke the algorithm Rai et al. $(a+1)(n-1)$ times, achieving a
total running time of 
$O(an) \cdot 2^{O((a+b)^3\log{(a+b)})} n^4 \log n= 2^{O((a+b)^3\log{(a+b)})} n^5 \log n$.

One of the standard approaches to try to obtain a faster FPT algorithm for \stmcu is the technique used for the 
$k$-Disjoint-Paths problem.
Kawarabayashi et al.~\cite{KKR12} show how to solve the problem in $O(n^2)$ time for any constant $k$, 
improving the previous cubic-time algorithm algorithm by Robertson and Seymour~\cite{gm13}, as part of their
graph minors project. Both papers employ the same basic structure.
In the following we show that the straightforward application of that idea does not apply here.
Some familiarity with the general structure of~\cite{gm13} or~\cite{KKR12} is convenient to follow the discussion.

The basic idea in those works is to split the algorithm into three cases: 
the graph has small treewidth, the graph has a large flat minor,
or the graph has a large clique minor. Let us concentrate in the last case: the graph $G$ has a large clique-minor.
For the sake of the discussion, we can directly assume that $G$ contains a large complete graph $K_\ell$ that is disjoint from $s$ and $t$, where $\ell \ge 3a+3b+3$ may depend on $a$ and $b$.
(Usually one would have $\ell=3a+3b+3$ or some other $\ell$ depending on $a,b$ linearly, depending on how the discussion continues.) 
The algorithm then considers two cases, depending on the minimum-size vertex cut $S$ separating $\{ s,t\}$ from some
vertex $v$ of $K_\ell$. This means that the vertex set of $G$ can be expressed as $V(G)=A\cup B$ where
$s,t\in A$, some vertex of $K_\ell$ is in $B$, there is no edge from $A\setminus B$ to $B\setminus A$,
and the size of the separator $S=A\cap B$ is minimized.
In~\cite{gm13}, the set $B$ is also chosen inclusion-wise minimal.
If the size of $S$ is large, then one can find $a+b$ vertex disjoint paths from $s$ to $t$, and thus
there is no $(a,b)$-mixed cut; see~\cite[Theorem 4.1]{KKR12}.
If the size of $S$ is small, for the disjoint paths problem, one can show that an equivalent instance
is obtained by removing $B\setminus A$ and connecting all the vertices of $S$. 
This last claim is not true for the mixed-cut. See Figure~\ref{fig:example} for an example showing 
that we can get from an instance that has a $(1,b)$-mixed cut for $s$ and $t$ and no $(1,b-1)$-mixed
cut for $s$ and $t$, but after the transformation, it has a $(1,2)$-mixed cut.
This of course does not exclude the option for faster algorithms modifying the approach.
\begin{figure}[h!]
\centering
	\includegraphics[page=6,scale=.7]{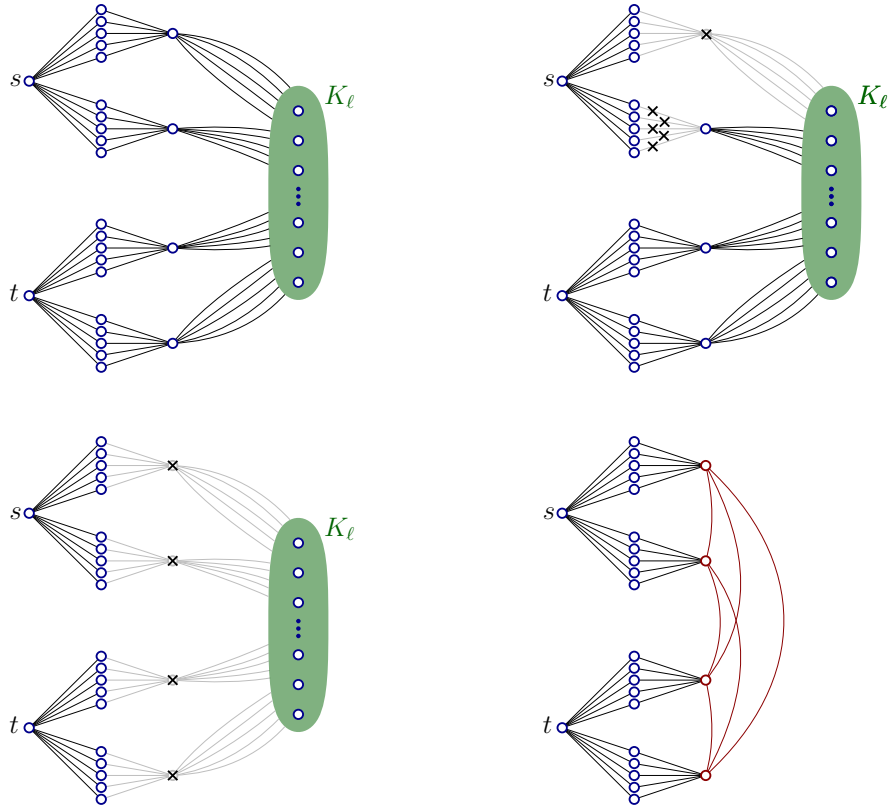}
	\caption{An instance for \stmcu (top left) that has an $(a=1,b=5)$-mixed cut for $s$ and $t$ (top right), 
		but no $(1,4)$-mixed cut for $s$ and $t$.
		The instance has an arbitrary large clique $K_\ell$ and there is a vertex-separation between $\{ s,t \}$
		and the clique $K_\ell$ with four vertices (bottom left). 
		Removing the part of the separation that contains $K_\ell$ and connecting the vertices of the
		four-vertex separator (bottom right), we get an instance that has a $(1,2)$-mixed cut for $s$ and $t$.
		The example can be easily generalized to any $b>5$.}
	\label{fig:example}
\end{figure}


\end{document}